\documentclass[preprint, pra, amsmath]{revtex4}
\usepackage{enumerate}
\usepackage{graphicx}
\usepackage{amsmath, amsthm, amssymb}
\newtheorem{theorem}{Theorem}

\newtheorem{lemma}{Lemma}

\theoremstyle{remark}
\newtheorem*{remark}{Remark}
\usepackage{enumerate}
\usepackage{graphicx}
\usepackage{amsmath, amssymb}
\usepackage{graphics}
%

\begin{document}
\newcommand{\real}{\textrm{Re}\:}
\newcommand{\sto}{\stackrel{s}{\to}}
\newcommand{\Tr}{\textrm{Tr}\:}
\newcommand{\supp}{\textrm{supp}\:}
\newcommand{\wto}{\stackrel{w}{\to}}
\newcommand{\ssto}{\stackrel{s}{\to}}
\newcounter{foo}
\providecommand{\norm}[1]{\lVert#1\rVert}
\providecommand{\abs}[1]{\lvert#1\rvert}

\title{Zero Energy Bound States in Three--Particle Systems}

\author{Dmitry K. Gridnev}
\affiliation{FIAS, Ruth-Moufang-Stra{\ss}e 1, D--60438 Frankfurt am Main, Germany}
\altaffiliation[On leave from:  ]{ Institute of Physics, St. Petersburg State University, Ulyanovskaya 1, 198504 Russia}

\begin{abstract}
We consider a 3--body system in $\mathbb{R}^3$ with non--positive potentials and non--negative essential spectrum. Under certain  requirements on the fall off of pair potentials it is proved that if at least one pair of particles has a zero energy resonance then a square integrable zero energy ground state of three particles does not exist. This complements the analysis in \cite{1}, where it was demonstrated that square integrable zero energy ground states are possible given that in all two--body subsystems there is no negative energy bound states and no zero energy resonances. As a corollary it is proved that one can tune the coupling constants of pair potentials so that for any given $R, \epsilon >0$: (a) the bottom of the essential spectrum is at zero; (b) there is a negative energy ground state $\psi(\xi)$, where $\int |\psi(\xi)|^2 = 1$; (c) $\int_{|\xi| \leq R} |\psi(\xi)|^2 < \epsilon$.
\end{abstract}

\maketitle

\section{Introduction}\label{sec:1}

In \cite{1} under certain restrictions on pair potentials it was proved that the 3--body system, which is at the 3--body coupling constant threshold, has a square integrable state at zero energy if none of the 2--body subsystems is bound or has a zero energy resonance. Naturally, a question can be raised whether the condition on the absence of 2--body zero energy resonances is essential. Here we show that indeed it is. Namely, in this paper we prove that the 3--body ground state at zero energy can only be a resonance and not a $L^2$ state if at least one pair of particles has a resonance at zero energy. The method of proof is inspired by \cite{klaus1,sobol}. In the last section we demonstrate that there do exist 3--body systems, where (a) each 2--body subsystem is unbound; (b) one 2--body subsystem is at the coupling constant threshold; (c) the 3--body system has a resonance at zero energy and bound states with the energy less or equal to zero. System like this can be constructed through appropriate tuning of the coupling constants. 

\section{A Zero Energy Resonance in a 2--Body System}\label{sec:2}

In this section we shall use the method of \cite{klaus1} to prove a result similar to Lemma~2.2 in \cite{sobol}. Let us consider the Hamiltonian of 2 particles in $\mathbb{R}^3$
\begin{equation}\label{2ps}
h_{12}(\varepsilon) := - \Delta_x - (1+\varepsilon) V_{12} (\alpha x) ,
\end{equation}
where $\varepsilon \geq 0$ is a parameter and $\alpha= \hbar /\sqrt{2 \mu_{12}}$ is a constant, which depends on the reduced mass $\mu_{12}$ \cite{1}. Additionally, we require
\begin{list}{R\arabic{foo}}
{\usecounter{foo}
    \setlength{\rightmargin}{\leftmargin}}
\item
$0\leq V_{12} (\alpha x) \leq b_1 e^{-b_2 |x|}$, where $b_{1,2} >0$ are constants.
\item
$h_{12}(0) \geq 0$ and $\sigma_{pp}(h_{12}(\varepsilon)) \cap (-\infty, 0) \neq \emptyset$ for $\varepsilon >0$. \end{list}
The requirement R2 means that $h_{12}(0)$ has a resonance at zero energy, that is, negative energy bound states emerge iff the coupling constant is incremented by an arbitrary amount (in terminology of \cite{klaus1} the system is at the  coupling constant threshold).

The following integral operator appears in the Birmmann--Schwinger principle \cite{reed,klaus1}
\begin{equation}\label{bms}
    L(k) := \sqrt{V_{12}} \Bigl(- \Delta_x + k^2 \Bigr)^{-1} \sqrt{V_{12}}.
\end{equation}
$L(k)$ is analytic for $\real k >0$. Due to R1 one can use the integral representation and analytically continue $L(k)$ into the interior of the circle on the complex plane, which has its center at $k=0$ and the radius $|b_2|$ \cite{klaus1}. The analytic continuation is denoted as $\tilde L(k) = \sum_n \tilde L_n k^n$, where $\tilde L_n$ are Hilbert-Schmidt operators.
\begin{remark}
In Sec.~2 in \cite{klaus1} (page 255) Klaus and Simon consider only finite range potentials. In this case $L(k)$ can be analytically continued into the whole complex plane. As the authors mention it in Sec.~9 the case of potentials with an exponential fall off requires only a minor change: the analytic continuation takes place in a circle $|k| < b_2$.
\end{remark}

Under requirements R1-2 the operator $L(0) = \tilde L(0)$ is Hilbert-Schmidt and its maximal eigenvalue is equal to one
\begin{equation}
L(0) \phi_0 = \phi_0 . 
\end{equation}
$L(0)$ is positivity--preserving, hence, the maximal eigenvalue is non--degenerate and $\phi_0 \geq 0$. We choose the normalization, where $\|\phi_0 \| = 1$.

By the standard Kato--Rellich perturbation theory \cite{kato,reed} there exists $\rho > 0$ such that for $|k| \leq \rho$
\begin{equation}
\tilde L(k) \phi(k)= \mu(k)\phi(k) ,
\end{equation}
where $\mu(k), \phi(k)$ are analytic, $\mu(0) = 1$, $\phi(0)=\phi_0$ and the eigenvalue $\mu(k)$ is non--degenrate.
By Theorem 2.2 in \cite{klaus1}
\begin{equation}\label{muexp}
\mu (k) = 1 - a k + O(k^2) ,
\end{equation}
where
\begin{equation}\label{whatisa}
a = (\phi_0, (V_{12})^{1/2})^2/(4\pi) > 0 . 
\end{equation}
The orthonormal projection operators
\begin{gather}
\mathbb{P}(k) := (\phi(k), \cdot)\phi(k) = (\phi_0, \cdot)\phi_0 + \mathcal{O}(k) , \label{prp}\\
\mathbb{Q}(k) := 1 - \mathbb{P}(k) \label{prq}
\end{gather}
are analytic for $|k| < \rho$ as well. Our aim is to analyze the following operator function on $k \in (0, \infty)$
\begin{equation}\label{defw}
W(k) = [1-L(k)]^{-1} . 
\end{equation}
By the Birmann--Schwinger principle $\| L(k) \| < 1$ for $k >0$, which makes $W(k)$ well--defined.
\begin{lemma}\label{lem:1}
There exists $\rho_0 >0$ such that for $k \in (0,\rho_0)$
\begin{equation}\label{expans}
W(k) = \frac{\mathbb{P}_0}{ak} + \mathcal{Z}(k) ,
\end{equation}
where $\mathbb{P}_0 := (\phi_0 , \cdot)\phi_0 $ and $\sup_{k \in (0,\rho_0)} \|\mathcal{Z}(k)\| < \infty$.
\end{lemma}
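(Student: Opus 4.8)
The plan is to exploit the Riesz spectral projection $\mathbb{P}(k)$, which commutes with $\tilde L(k)$ and isolates the simple eigenvalue $\mu(k)$, in order to reduce the inversion of $1-L(k)$ to a scalar problem on $\mathrm{Ran}\,\mathbb{P}(k)$ plus a harmless block on $\mathrm{Ran}\,\mathbb{Q}(k)$. Since $L(k)=\tilde L(k)$ for $k>0$ on the common domain of definition, I would fix $\rho_0\le\rho$ and work with the analytic family $\tilde L(k)$ on $(0,\rho_0)$. For real $k>0$ the operator $L(k)$ is self--adjoint, $\mathbb{P}(k)$ is the orthogonal spectral projection onto the eigenvalue $\mu(k)$, and both $\mathrm{Ran}\,\mathbb{P}(k)$ and $\mathrm{Ran}\,\mathbb{Q}(k)$ are invariant under $\tilde L(k)$. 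Hence $1-\tilde L(k)$ is block diagonal with respect to this orthogonal splitting, and
\begin{equation}
W(k)=\frac{\mathbb{P}(k)}{1-\mu(k)}+\Bigl[(1-\tilde L(k))\big|_{\mathrm{Ran}\,\mathbb{Q}(k)}\Bigr]^{-1}\mathbb{Q}(k). \notag
\end{equation}

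Next I would show that the second block is uniformly bounded near $k=0$. At $k=0$ the operator $\tilde L(0)=L(0)$ has $1$ as its \emph{maximal} and non--degenerate eigenvalue, so $\sigma(L(0))\setminus\{1\}$ is bounded away from $1$; restricted to $\mathrm{Ran}\,\mathbb{Q}(0)=\{\phi_0\}^{\perp}$, the operator $1-L(0)$ is therefore boundedly invertible. Because $\tilde L(k)$ and $\mathbb{Q}(k)$ are analytic for $\abs{k}<\rho$, this spectral gap persists for $\abs{k}$ small, so the reduced inverse $[(1-\tilde L(k))|_{\mathrm{Ran}\,\mathbb{Q}(k)}]^{-1}\mathbb{Q}(k)$ remains analytic and hence uniformly bounded on a possibly smaller interval $(0,\rho_0)$. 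This is the only point where the hypothesis that $1$ be the top of the spectrum enters, and it is the step I expect to require the most care, since one must confirm that the gap is preserved under the perturbation uniformly in $k$ rather than merely pointwise.

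Finally I would extract the singular part from the first block. From (\ref{muexp}) one has $1-\mu(k)=ak+\mathcal{O}(k^2)=ak\,(1+\mathcal{O}(k))$ with $a>0$, so for small $k>0$, $(1-\mu(k))^{-1}=(ak)^{-1}+\mathcal{O}(1)$. Combining this with $\mathbb{P}(k)=\mathbb{P}_0+\mathcal{O}(k)$ from (\ref{prp}) gives
\begin{equation}
\frac{\mathbb{P}(k)}{1-\mu(k)}=\Bigl(\frac{1}{ak}+\mathcal{O}(1)\Bigr)\bigl(\mathbb{P}_0+\mathcal{O}(k)\bigr)=\frac{\mathbb{P}_0}{ak}+\mathcal{O}(1), \notag
\end{equation}
where the cross term $\mathcal{O}(k)/(ak)$ stays bounded. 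Collecting the two blocks yields $W(k)=\mathbb{P}_0/(ak)+\mathcal{Z}(k)$ with
\begin{equation}
\mathcal{Z}(k):=\Bigl(\frac{\mathbb{P}(k)}{1-\mu(k)}-\frac{\mathbb{P}_0}{ak}\Bigr)+\Bigl[(1-\tilde L(k))\big|_{\mathrm{Ran}\,\mathbb{Q}(k)}\Bigr]^{-1}\mathbb{Q}(k), \notag
\end{equation}
and the estimates above show $\sup_{k\in(0,\rho_0)}\norm{\mathcal{Z}(k)}<\infty$, which is the claim.
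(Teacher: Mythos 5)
Your proposal is correct and takes essentially the same route as the paper: the paper also splits $W(k)$ via $\mathbb{P}(k)$ and $\mathbb{Q}(k)$, writes the regular block as $[1-\mathbb{Q}(k)L(k)]^{-1}\mathbb{Q}(k)$ (which is precisely your restricted inverse on $\mathrm{Ran}\,\mathbb{Q}(k)$), bounds it uniformly by the persistence of the gap isolating $\mu(k)$, and extracts the singular term from $\mathbb{P}(k)/(1-\mu(k))$ using (\ref{muexp}) and (\ref{prp}). The differences are purely notational.
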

\begin{proof}
$\tilde L(k) = L(k)$ when $k \in (0,\rho)$. We get from (\ref{defw})
\begin{gather}
W(k) = [1-L(k)]^{-1} = [1-L(k)]^{-1} \mathbb{P}(k)  +  [1-L(k)]^{-1} \mathbb{Q}(k) \\
= [1-\mu(k) \mathbb{P}(k) ]^{-1} \mathbb{P}(k) + [1-\mathbb{Q}(k) L(k)]^{-1} \mathbb{Q}(k) \\
= \frac 1{1-\mu(k)} \mathbb{P}(k) + \mathcal{\tilde Z}(k),
\end{gather}
where
\begin{equation}
\mathcal{\tilde Z}(k) := [1-\mathbb{Q}(k) L(k)]^{-1} \mathbb{Q}(k) . 
\end{equation}
Note that $\sup_{k \in (0,\rho)}\| \mathbb{Q}(k) L(k)\| < 1$  because the eigenvalue $\mu(k)$ remains isolated in this range. Thus $\mathcal{\tilde Z}(k) = O(1)$. Using (\ref{muexp}),(\ref{whatisa}) and (\ref{prp}) proves the lemma. \end{proof}

\begin{remark}
The singularity of $W(k)$ near $k=0$ has been analyzed in \cite{sobol} (Lemma 2.2 in \cite{sobol}), see also \cite{yafaev}. The decomposition (\ref{expans}) differs in the sense that $\mathcal{Z}(k)$ is uniformly bounded in the vicinity of $k=0$. The price we have to pay for that is the requirement R1 on the exponential fall off of $V_{12}$.
\end{remark}

\section{Main Result}\label{sec:3}

Let us consider the Schr\"odinger operator for three particles in $\mathbb{R}^3$
\begin{equation}\label{xc31aa}
    H = H_0 - V_{12}(r_1 - r_2) - V_{13}(r_1 - r_3) - V_{23}(r_2 - r_3),
\end{equation}
where $r_i$ are particle position vectors and $H_0$ is the kinetic energy operator with the center of mass removed.
Apart from R1-2 we shall need the following additional requirement
\begin{list}{R\arabic{foo}}
{\usecounter{foo}
    \setlength{\rightmargin}{\leftmargin}}
\setcounter{foo}{2}
\item$V_{13}, V_{23} \in L^2 (\mathbb{R}^3) + L^\infty_\infty (\mathbb{R}^3) $
 and  $V_{13}, V_{23} \geq 0$ and $V_{23} \neq 0$.
\end{list}
Here we shall prove
\begin{theorem}\label{th:1}
Suppose $H$ defined in (\ref{xc31aa}) satisfies R1-3. Suppose additionally that $H \geq 0$ and $H\psi_0 =0$, where $\psi_0 \in D(H_0)$. Then $\psi_0 = 0$.
\end{theorem}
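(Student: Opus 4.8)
The plan is to combine positivity with the singular expansion of Lemma~\ref{lem:1} to show that a nontrivial $\psi_0$ would be forced to carry infinite $L^2$ norm through the zero--energy channel of the $12$--pair. I would work in Jacobi--type coordinates $(x,y)\in\mathbb{R}^3\times\mathbb{R}^3$ in which $H_0=-\Delta_x-\Delta_y$ and the $12$--interaction takes the form studied in Sec.~\ref{sec:2}, so that $h_{12}(0)=-\Delta_x-V_{12}$ acts on the $x$ variable and $H\psi_0=0$ reads $(h_{12}(0)-\Delta_y)\psi_0=g$ with $g:=(V_{13}+V_{23})\psi_0$. Since $H\ge 0$, $H\psi_0=0$ and the essential spectrum is non--negative, $\psi_0$ is a ground state of $H$; as $e^{-tH}$ is positivity improving (the potentials are non--negative and locally integrable), $0$ is a simple eigenvalue and we may take $\psi_0>0$. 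By R3 we then have $g\ge 0$ and $g\not\equiv 0$, and $g\in L^2$ because $g=H_0\psi_0-V_{12}\psi_0$ with $H_0\psi_0\in L^2$ and $V_{12}\psi_0\in L^2$.

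First I would take the partial Fourier transform in $y$: writing $\hat\psi_0(\cdot,p)$ and $\hat g(\cdot,p)$, the equation becomes $(h_{12}(0)+|p|^2)\hat\psi_0(\cdot,p)=\hat g(\cdot,p)$, which for $p\neq 0$ inverts to $\hat\psi_0(\cdot,p)=(h_{12}(0)+|p|^2)^{-1}\hat g(\cdot,p)$. Setting $k=|p|$ and $R_0(k):=(-\Delta_x+k^2)^{-1}$, the Birmann--Schwinger identity gives $(h_{12}(0)+k^2)^{-1}=R_0(k)+R_0(k)\sqrt{V_{12}}\,W(k)\,\sqrt{V_{12}}\,R_0(k)$, and substituting $W(k)=\mathbb{P}_0/(ak)+\mathcal{Z}(k)$ from Lemma~\ref{lem:1} isolates the singular term
\begin{equation}
A(p):=\frac{c(k)}{ak}\,\chi_p,\qquad \chi_p:=R_0(k)\sqrt{V_{12}}\,\phi_0,\qquad c(k):=(\phi_0,\sqrt{V_{12}}\,R_0(k)\hat g(\cdot,p)).
\end{equation}
The zero--energy resonance function $\Phi_{\mathrm{res}}:=R_0(0)\sqrt{V_{12}}\,\phi_0$ satisfies $h_{12}(0)\Phi_{\mathrm{res}}=0$, decays like $1/|x|$, and, since $R_0(0)$ has a positive kernel and $\phi_0\ge 0$, is strictly positive. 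Hence $c(0)=(\Phi_{\mathrm{res}},\hat g(\cdot,0))_{L^2_x}$, and positivity is decisive here: $\hat g(\cdot,0)$ is, up to a constant, $\int g(x,y)\,dy\ge 0$, which is not identically zero by R3, so $c(0)>0$.

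The contradiction comes from the slow decay of $\Phi_{\mathrm{res}}$. Because $\chi_p\to\Phi_{\mathrm{res}}\notin L^2_x$ as $p\to 0$, a direct computation in the $x$--Fourier variable gives $\|\chi_p\|_{L^2_x}^2\sim C/|p|$. I would then estimate the component of $\hat\psi_0(\cdot,p)$ along $\chi_p$: using that $\|\mathcal{Z}(k)\|$ is uniformly bounded (Lemma~\ref{lem:1}), that $\sqrt{V_{12}}$ is bounded (R1), and that the kernel of $R_0(k)^2$ grows only like $1/k$, the remainder inner products $(\chi_p,R_0(k)\hat g)$ and $(\chi_p,R_0(k)\sqrt{V_{12}}\mathcal{Z}(k)\sqrt{V_{12}}R_0(k)\hat g)$ are $o(|p|^{-2})$, while $(\chi_p,A(p))=c(k)\|\chi_p\|^2/(ak)\sim c(0)C/(a|p|^{2})$. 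Therefore $\|\hat\psi_0(\cdot,p)\|_{L^2_x}\ge|(\chi_p,\hat\psi_0(\cdot,p))|/\|\chi_p\|\gtrsim|p|^{-3/2}$ for small $|p|$, and so $\|\psi_0\|^2=\int\|\hat\psi_0(\cdot,p)\|_{L^2_x}^2\,dp\gtrsim\int_{|p|<\delta}|p|^{-3}\,dp=\infty$, which forces $\psi_0=0$.

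I expect the main obstacle to be the quantitative control of the non--resonant remainder near $p=0$: one must check that $\|\hat g(\cdot,p)\|_{L^2_x}$ stays bounded as $p\to 0$ (a mild integrability--in--$y$ property of $g$, traceable to R3 and $\psi_0\in D(H_0)$) and that the two remainder inner products above are genuinely of lower order than the singular term, so that no cancellation can defeat the $|p|^{-3/2}$ lower bound. The positivity inputs---$\psi_0>0$ from simplicity of the ground state, $\phi_0\ge 0$, and $V_{13},V_{23}\ge 0$ with $V_{23}\neq 0$---are precisely what upgrade $c(0)\neq 0$ to $c(0)>0$, and this is exactly the place where R3 enters the argument.
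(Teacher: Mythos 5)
Your proposal reproduces the paper's overall architecture (partial Fourier transform in $y$, the $\mathbb{P}_0/(ak)$ singularity of Lemma~\ref{lem:1} with $k=\sqrt{p_y^2+z^2}$ there and $k=|p|$ here, positivity pinning a nonvanishing coefficient, and a logarithmically divergent $\int_{|p|<\delta}|p|^{-3}\,d^3p$ contradicting $\psi_0\in L^2$), but your execution through the fiberwise \emph{equality} $\hat\psi_0(\cdot,p)=(h_{12}(0)+|p|^2)^{-1}\hat g(\cdot,p)$ has a genuine gap at the decisive step. Every one of your small--$p$ estimates requires pointwise--in--$p$ control of $\hat g(\cdot,p)$ in $L^1_x\cap L^2_x$, i.e.\ $L^1$--type decay of $g=(V_{13}+V_{23})\psi_0$ in both variables, and this is not the ``mild integrability property traceable to R3 and $\psi_0\in D(H_0)$'' you hope for: $\psi_0\in D(H_0)=H^2(\mathbb{R}^6)$ carries no spatial decay whatsoever (the whole point of the theorem is that threshold states tend to spread), and R3 permits $V_{13},V_{23}$ to have $L^\infty_\infty$ tails, so $g\in L^1(\mathbb{R}^6)$ fails in general; then $\hat g(\cdot,0)$ has no pointwise meaning and your coefficient $c(0)=(\Phi_{\mathrm{res}},\hat g(\cdot,0))>0$ is simply undefined, as is the continuity $c(k)\to c(0)$. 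The first remainder is where this bites quantitatively: with only the available bound $\hat g(\cdot,p)\in L^2_x$, Cauchy--Schwarz gives $|(\chi_p,R_0(k)\hat g)|\le\|R_0(k)^2\sqrt{V_{12}}\,\phi_0\|\,\|\hat g(\cdot,p)\|$, and since $\widehat{\sqrt{V_{12}}\phi_0}(0)\neq 0$ (cf.\ (\ref{whatisa})) one has $\|R_0(k)^2\sqrt{V_{12}}\phi_0\|\sim k^{-5/2}$, which \emph{swamps} your claimed main term of order $k^{-2}$; the $O(1/k)$ bound you invoke from the kernel $e^{-k|x-x'|}/(8\pi k)$ of $R_0(k)^2$ presupposes exactly the uniform $L^1_x$ control of $\hat g(\cdot,p)$ that you do not have. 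Likewise $\sup_p\|\hat g(\cdot,p)\|_{L^2_x}<\infty$, which your $\mathcal{Z}$--remainder estimate uses, already needs $g\in L^2_xL^1_y$ --- again a decay statement with no source in the hypotheses.

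The paper's proof exists precisely to circumvent this obstruction, and that is where it genuinely diverges from your route: instead of the fiberwise equality it runs a chain of one--sided positivity inequalities, using Lemma~\ref{lem:2} (positivity preservation of $[H_0+z^2]^{-1}$ and of the Birmann--Schwinger inverse) to \emph{discard} all source terms except the single contribution $V_{23}\psi_1$, where $\psi_1=\psi_0\chi_{\{|\xi|\le b\}}$ (Eq.~(\ref{e7})) is a compactly supported truncation --- legitimate only because one is bounding from below, not solving an equation --- and inserts an extra smoothing resolvent $[H_0+1]^{-1}$. Lemma~\ref{lem:4} then proves, via the pointwise Green's function bound (\ref{bgf}), that the reduced function $g(y)=\int\Phi(x,y)\sqrt{V_{12}(\alpha x)}\,\phi_0(x)\,dx$ lies in $L^1\cap L^2(\mathbb{R}^3)$ with $g\neq0$, and Lemma~\ref{lem:3} converts this into the divergence $\lim_{z\to+0}\|f_1(z)\|=\infty$, contradicting $0\le f(z)\le\psi_0$ from (\ref{e11}). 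So the repair is not to seek decay of $(V_{13}+V_{23})\psi_0$ (it may be false) but to replace your equality by the inequality $\psi_0\ge f(z)\ge 0$ and run your fiber analysis on the truncated, smoothed source --- which is exactly what the paper does.
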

We defer the proof to the end of this section. Our next aim would be to derive the inequality (\ref{e11})-(\ref{e12a}).

Like in \cite{1} we use the Jacobi coordinates $x = [\sqrt{2 \mu_{12}}/\hbar](r_2 - r_1)$ and $y = [\sqrt{2 M_{12}}/\hbar](r_3 - m_1/(m_1+m_2) r_1 - m_2/(m_1+m_2) r_2)$, where $\mu_{ij} = m_i m_j /(m_i + m_j)$ and $M_{ij} = (m_i + m_j)m_l / (m_i + m_j + m_l)$ are reduced masses (the indices $i,j,l$ are all different). The full set of coordinates in $\mathbb{R}^6$ is labeled by $\xi$. In the Jacobi coordinates the kinetic energy operator takes the form
\begin{equation}\label{ay4}
    H_0 = - \Delta_x - \Delta_y .
\end{equation}
Following the notation in \cite{1} $\mathcal{F}_{12}$ denotes the partial Fourier transform in $L^2(\mathbb{R}^6)$
\begin{equation}\label{ay5}
\hat f(x,p_y)  =  \mathcal{F}_{12} f(x,y) = \frac 1{(2 \pi )^{3/2}} \int d^3 y \; e^{-ip_y \cdot \; y} f(x,y) .
\end{equation}
(Here and always a hat over a function denotes its Fourier transform in $y$-coordinates). We shall need the following trivial technical lemmas. 
\begin{lemma}\label{lem:2}
Suppose an operator $A$ is positivity preserving and $\| A\| < 1$. Then  $(1-A)^{-1}$ is bounded and positivity preserving.
\end{lemma}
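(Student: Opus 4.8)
The plan is to represent $(1-A)^{-1}$ through its Neumann series and then argue that positivity preservation passes to the norm limit. First I would observe that since $\norm{A} < 1$ the series $\sum_{n=0}^\infty A^n$ converges in operator norm, because $\sum_{n=0}^\infty \norm{A}^n = (1-\norm{A})^{-1} < \infty$, and that its sum coincides with $(1-A)^{-1}$ by the standard Neumann argument $(1-A)\sum_{n=0}^N A^n = 1 - A^{N+1} \to 1$. This immediately yields boundedness together with the estimate $\norm{(1-A)^{-1}} \leq (1-\norm{A})^{-1}$.

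Next I would establish that every term of the series is positivity preserving. Composition of positivity preserving operators is again positivity preserving: if $f \geq 0$ then $Af \geq 0$, hence $A^2 f = A(Af) \geq 0$, and inductively $A^n f \geq 0$ for every $n \geq 0$, with $A^0 = 1$ trivially positivity preserving. Consequently each partial sum $S_N := \sum_{n=0}^N A^n$ is positivity preserving, being a finite sum of such operators.

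The decisive step is the passage to the limit. Fixing any $f \geq 0$, I would note that $S_N f \geq 0$ for every $N$, while $S_N f \to (1-A)^{-1} f$ in $L^2$ as $N \to \infty$. Since the cone of (a.e.)\ non--negative functions is norm--closed in $L^2$ --- a norm--convergent sequence admits a subsequence converging almost everywhere, and an almost--everywhere limit of non--negative functions is non--negative --- it follows that $(1-A)^{-1} f \geq 0$. As $f \geq 0$ was arbitrary, $(1-A)^{-1}$ is positivity preserving.

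The only point demanding any care is this last one, namely the closedness of the positive cone under $L^2$ convergence, which is what allows the pointwise positivity of the partial sums to survive in the limit; the remaining steps are entirely routine. With that observation the lemma follows.
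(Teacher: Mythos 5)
Your proposal is correct and is exactly the paper's argument: the paper's entire proof reads ``A simple expansion of $(1-A)^{-1}$ into von Neumann series,'' and you have simply supplied the routine details (norm convergence of $\sum_n A^n$, positivity of each partial sum, and norm--closedness of the positive cone in $L^2$) that the paper leaves implicit. Nothing to add.
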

\begin{proof}
A simple expansion of $(1-A)^{-1}$ into von Neumann series. \end{proof}
\begin{lemma}\label{lem:3}
Suppose $g(y) \in L^2 \cap L^1 (\mathbb{R}^3) $  and $g(y) \geq 0, g \neq 0$. Then
\begin{equation}\label{sq1}
    \lim_{z \to + 0} \int_{|p_y|\leq \epsilon_0} d^3 p_y \frac{|\hat g|^2}{(p_y^2 + z^2)^{3/2}} = \infty
\end{equation}
for all $\epsilon_0 > 0$.
\end{lemma}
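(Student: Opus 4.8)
The plan is to exploit the fact that, because $g$ is nonnegative and not identically zero, its Fourier transform does not vanish at the origin, so the numerator $|\hat g|^2$ stays bounded away from zero on a neighborhood of $p_y = 0$. The divergence then comes entirely from the singular weight $(p_y^2 + z^2)^{-3/2}$, which at $z=0$ reduces to $|p_y|^{-3}$ and fails to be locally integrable in $\mathbb{R}^3$.

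First I would observe that since $g \in L^1(\mathbb{R}^3)$ the Fourier transform $\hat g$ is continuous, and that
\[
\hat g(0) = \frac{1}{(2\pi)^{3/2}} \int_{\mathbb{R}^3} g(y)\, d^3 y > 0 ,
\]
using $g \geq 0$ and $g \neq 0$. By continuity of $|\hat g|^2$ there is a radius $\delta \in (0, \epsilon_0]$ and a constant $c > 0$ (for instance $c = |\hat g(0)|^2 / 2$) such that $|\hat g(p_y)|^2 \geq c$ whenever $|p_y| \leq \delta$. Restricting the integral to this smaller ball and discarding the nonnegative remainder yields the lower bound
\[
\int_{|p_y| \leq \epsilon_0} \frac{|\hat g|^2}{(p_y^2 + z^2)^{3/2}}\, d^3 p_y \;\geq\; c \int_{|p_y| \leq \delta} \frac{d^3 p_y}{(p_y^2 + z^2)^{3/2}} .
\]

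Next I would evaluate the remaining integral in spherical coordinates and rescale. Writing it as $4\pi \int_0^\delta r^2 (r^2 + z^2)^{-3/2}\, dr$ and substituting $r = z t$ collapses it to $4\pi \int_0^{\delta/z} t^2 (t^2+1)^{-3/2}\, dt$, the powers of $z$ cancelling exactly. Since the integrand is asymptotic to $1/t$ as $t \to \infty$, the integral over $[0, \delta/z]$ grows like $\log(\delta/z)$ and hence tends to $+\infty$ as $z \to +0$. Combining this with the lower bound above gives the claim for every $\epsilon_0 > 0$.

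There is no genuine obstacle here; the lemma is elementary, as the authors signal by calling it trivial. The only point that uses the hypotheses in an essential way is the nonvanishing of $\hat g$ at the origin, which is exactly where the positivity $g \geq 0$ enters: without it $\hat g(0)$ could vanish and the singular weight alone would not force divergence. Everything else is the routine computation of a logarithmically divergent radial integral.
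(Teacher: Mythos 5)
Your proof is correct and follows essentially the same route as the paper: localize near $p_y=0$, use the positivity of $g$ to bound $|\hat g|^2$ below by a positive constant on a small ball, and then observe that $\int_{|p_y|\leq\delta} (p_y^2+z^2)^{-3/2}\, d^3p_y$ diverges logarithmically as $z\to +0$. The only difference is cosmetic: where you invoke continuity of $\hat g$ (valid since $g\in L^1$) together with $\hat g(0)>0$, the paper proves the same lower bound by hand via the estimate $|\hat g(p_y)|^2 \geq \bigl(\int g(y)\cos(p_y\cdot y)\, d^3y\bigr)^2$ and an explicit choice of radius making $\cos(p_y\cdot y)\geq \tfrac12$ on the bulk of the mass of $g$, which yields the quantitative constant $\|g\|_1^2/64$ but adds nothing structurally.
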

\begin{proof}
Let us set
\begin{equation}\label{sq3}
    J_{\epsilon}(z) = \int_{|p_y| < \epsilon} d^3 p_y \frac{1}{(p_y^2 + z^2)^{3/2}} \left| \int d^3 y e^{i p_y \cdot y} g(y) \right|^2 . 
\end{equation}
We have
\begin{equation}\label{sq4}
    J_{\epsilon} (z) \geq \int_{|p_y| < \epsilon} d^3 p_y \frac{1}{(p_y^2 + z^2)^{3/2}} \left| \int d^3 y \;  g(y) \cos{(p_y \cdot y)}\right|^2 . 
\end{equation}
Let us fix $r$ so that
\begin{equation}\label{sq4a}
    \int_{|y|>r} d^3 y g(y) = \frac 14 \|g\|_1
\end{equation}
Setting $\epsilon = \min [\epsilon_0 , \pi/(3r)]$ we get
\begin{equation}\label{sq5}
\cos{(p_y \cdot y)} \geq \frac 12  \quad \quad \mathrm{if} \quad |p_y| \leq \epsilon , |y|\leq r .
\end{equation}
Substituting (\ref{sq5}) and (\ref{sq4a}) into (\ref{sq4}) we get
\begin{equation}\label{sq6}
    J_\epsilon (z) \geq \frac{\| g \|_1^2}{64} \int_{|p_y| < \epsilon} d^3 p_y \frac{1}{(p_y^2 + z^2)^{3/2}} . 
\end{equation}
The integral in (\ref{sq6}) logarithmically diverges for $z \to +0$.  \end{proof}

\begin{remark}
Lemma~\ref{lem:2} may hold for $g(y) \in L^2 (\mathbb{R}^3)$ but we could not prove this. 
\end{remark}

So let us assume that there is a bound state $\psi_0 \in D(H_0)$ at zero energy, where $\psi_0 >0$ because it is the ground state \cite{reed}. Then we would have
\begin{equation}\label{e1}
    H_0 \psi_0 = V_{12} \psi_0 + V_{13} \psi_0 + V_{23} \psi_0 ,
\end{equation}
Adding the term $z^2 \psi_0$ (where here and further $z > 0$ ) and acting with an inverse operator on both sides of (\ref{e1}) gives
\begin{gather}
    \psi_0 = [H_0 + z^2]^{-1}V_{12} \psi_0 + [H_0 + z^2]^{-1} V_{13} \psi_0 + [H_0 + z^2]^{-1}V_{23} \psi_0 \label{e3}\\
    + z^2 [H_0 + z^2]^{-1}\psi_0 . \label{e3a} 
\end{gather}
From now we let $z$ vary in the interval $(0,\rho_0/2)$, where $\rho_0$ was defined in Lemma~\ref{lem:1}. Because the operator $[H_0 + z^2]^{-1}$ is positivity preserving \cite{reed} we obtain the inequality
\begin{equation}\label{e4}
    \psi_0 \geq [H_0 + z^2]^{-1}\sqrt{V_{12}} (\sqrt{V_{12}} \psi_0 )
\end{equation}
Now let us focus on the term $\sqrt{V_{12}} \psi_0 $. Using (\ref{e3}) we get
\begin{gather}\label{e5}
    \Bigl[1- \sqrt{V_{12}} (H_0 + z^2)^{-1} \sqrt{V_{12}}\Bigr]  \sqrt{V_{12}} \psi_0 =  \sqrt{V_{12}} [H_0 + z^2]^{-1} V_{13} \psi_0 \\
    +  \sqrt{V_{12}} [H_0 + z^2]^{-1}V_{23} \psi_0 + z^2  \sqrt{V_{12}} [H_0 + z^2]^{-1}\psi_0
\end{gather}
And by Lemma~\ref{lem:2}
\begin{equation}\label{e6}
    \sqrt{V_{12}} \psi_0 \geq  \Bigl[1- \sqrt{V_{12}} (H_0 + z^2)^{-1} \sqrt{V_{12}}\Bigr]^{-1}  \sqrt{V_{12}} [H_0 + z^2]^{-1} V_{23} \psi_0  . 
\end{equation}
It is technically convenient to cut off the wave function $\psi_0$ by introducing 
\begin{equation}\label{e7}
\psi_1 := \psi_0 (\xi) \chi_{\{\xi|\; |\xi| \leq b\}} ,
\end{equation}
where clearly $\psi_1 \in L^2 \cap L^1 (\mathbb{R}^6)$ and $b>0$ is fixed so that $\|V_{23} \psi_1\| \neq 0$ (which is always possible since $V_{23} \neq 0$).

Applying again Lemma~\ref{lem:2} we get out of (\ref{e6})
\begin{equation}\label{e6new}
    \sqrt{V_{12}} \psi_0 \geq  \Bigl[1- \sqrt{V_{12}} (H_0 + z^2)^{-1} \sqrt{V_{12}}\Bigr]^{-1}  \sqrt{V_{12}} [H_0 + 1]^{-1} V_{23} \psi_1 \: .
\end{equation}
Substituting (\ref{e6new}) into (\ref{e4}) gives that for all $z \in (0, \rho_0 /2)$
\begin{equation}\label{e11}
\psi_0 \geq f(z) \geq 0,
\end{equation}
where
\begin{gather}
    f(z) = [H_0 + z^2]^{-1}\sqrt{V_{12}} \Bigl[1- \sqrt{V_{12}} (H_0 + z^2)^{-1} \sqrt{V_{12}}\Bigr]^{-1}  \label{e12} \\
    \times \sqrt{V_{12}} [H_0 + 1]^{-1} V_{23} \psi_1 \: . \label{e12a}
\end{gather}

Our aim would be to prove that $\lim_{z \to +0} \|f(z)\| = \infty$, which would be in contradiction with (\ref{e11}).
Let us define
\begin{gather}\label{g(p_y)}
\Phi (x,y) := [H_0 + 1]^{-1} V_{23} \psi_1 , \\
    g(y) := \int d x \; \Phi (x,y) \sqrt{V_{12} (\alpha x)}\phi_0 (x) ,
\end{gather}
where $\phi_0$ is defined in Sec.~\ref{sec:2}.
\begin{lemma}\label{lem:4}
$g(y) \in L^1 \cap L^2 (\mathbb{R}^3)$ and $g(y) \neq 0$.
\end{lemma}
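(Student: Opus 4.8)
The plan is to prove the three assertions — $g\in L^2$, $g\in L^1$, and $g\neq0$ — by unwinding the definitions, keeping track of positivity throughout, and treating $[H_0+1]^{-1}$ as convolution with the explicit $\mathbb{R}^6$ Yukawa kernel. First I would record the elementary facts. Writing $\eta(x):=\sqrt{V_{12}(\alpha x)}\,\phi_0(x)$, requirement R1 gives $\sqrt{V_{12}(\alpha x)}\le\sqrt{b_1}\,e^{-b_2|x|/2}\in L^2\cap L^\infty$, so $\eta\in L^1\cap L^2(\mathbb{R}^3_x)$ and $\eta\ge0$; moreover $\eta\neq0$, since $\sqrt{V_{12}}\phi_0=0$ would force $L(0)\phi_0=\sqrt{V_{12}}(-\Delta)^{-1}(\sqrt{V_{12}}\phi_0)=0\neq\phi_0$. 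Next, $\psi_1$ is compactly supported and, since $\psi_0\in D(H_0)=H^2(\mathbb{R}^6)\hookrightarrow L^6(\mathbb{R}^6)$, we have $\psi_1\in L^6$ with compact support. Splitting $V_{23}=v_2+v_\infty$ with $v_2\in L^2(\mathbb{R}^3)$, $v_\infty\in L^\infty$, the restriction of $v_2$ to the bounded support $B$ of $\psi_1$ lies in $L^2(\mathbb{R}^6)$ because $B$ has finite extent in the complementary variable; Hölder with $\tfrac1{3/2}=\tfrac12+\tfrac16$ then yields $V_{23}\psi_1\in L^{3/2}(\mathbb{R}^6)$, and compact support upgrades this to $V_{23}\psi_1\in L^q$ for every $1\le q\le3/2$, in particular $q=1$ and $q=4/3$.

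For the $L^2$ bound I would interpret $\Phi=[H_0+1]^{-1}V_{23}\psi_1=G_6*(V_{23}\psi_1)$, where $G_6\ge0$ is the kernel of $(-\Delta_{\mathbb{R}^6}+1)^{-1}$, behaving like $|\xi|^{-4}$ at the origin and decaying exponentially, so that $G_6\in L^p(\mathbb{R}^6)$ for all $1\le p<3/2$. Young's inequality with $p=q=4/3$ (so that $1+\tfrac12=\tfrac34+\tfrac34$) then gives $\Phi\in L^2(\mathbb{R}^6)$, and $\Phi\ge0$. Since $g(y)=\int\eta(x)\Phi(x,y)\,dx$, Cauchy--Schwarz in $x$ gives $\abs{g(y)}^2\le\norm{\eta}_{2}^2\int\abs{\Phi(x,y)}^2\,dx$, whence $\norm{g}_{L^2_y}^2\le\norm{\eta}_2^2\,\norm{\Phi}_{L^2(\mathbb{R}^6)}^2<\infty$, proving $g\in L^2(\mathbb{R}^3)$.

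For the $L^1$ bound and the nonvanishing I would integrate the $y$-variable out of the kernel. A direct Fourier computation shows that $K(x):=\int_{\mathbb{R}^3}G_6(x,y)\,dy$ is the three-dimensional Yukawa kernel $e^{-|x|}/(4\pi|x|)$, which is strictly positive, lies in $L^1\cap L^2(\mathbb{R}^3)$, and has unit integral. Because every factor is nonnegative, Tonelli's theorem together with the translation invariance and symmetry of $K$ gives $\norm{g}_1=\int g=\int_{\mathbb{R}^6}\zeta(x)\,(V_{23}\psi_1)(\xi)\,d\xi$ with $\zeta:=K*\eta$. Young's inequality in the form $L^2*L^2\to L^\infty$ (using $K,\eta\in L^2$) shows $\zeta\in L^\infty$, so $\norm{g}_1\le\norm{\zeta}_\infty\norm{V_{23}\psi_1}_1<\infty$, giving $g\in L^1$. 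Finally, $K>0$ together with $\eta\ge0,\ \eta\neq0$ forces $\zeta>0$ everywhere, while $V_{23}\psi_1\ge0$ and $V_{23}\psi_1\neq0$ (this is exactly how $b$ was chosen in (\ref{e7})); hence $\norm{g}_1=\int\zeta\,(V_{23}\psi_1)>0$ and $g\neq0$.

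The main obstacle to anticipate is the $L^2$-membership of $\Phi$. Since $V_{23}$ is only $L^2+L^\infty$ in its three-dimensional argument and $\psi_1$ is only $L^6$ — the Sobolev embedding $H^2(\mathbb{R}^6)\hookrightarrow L^6$ is borderline and does not reach $L^\infty$ — the product $V_{23}\psi_1$ need not belong to $L^2(\mathbb{R}^6)$, so one cannot simply invoke boundedness of the resolvent on $L^2$. The remedy is to run the estimate at the level of $L^{3/2}$ (equivalently $L^{4/3}$) integrability and to exploit the explicit mapping properties of the $\mathbb{R}^6$ Yukawa kernel through Young's inequality; keeping careful track of the exponents, and of the compact support of $\psi_1$, is where the real care is required.
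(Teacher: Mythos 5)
Your proof is correct, and for the two substantive parts of the lemma it takes a genuinely different route from the paper. The paper's own proof takes $\Phi\in L^2(\mathbb{R}^6)$ for granted and concentrates on $\Phi\in L^1$: it derives the pointwise bound $G_0(\xi,1)\le \tfrac{4}{9\pi|\xi|^4}e^{-|\xi|/2}$ from the heat-kernel representation of the resolvent, uses $L^2\subset L^1$ on the bounded region $|\xi|\le 2b$, and on $|\xi|\ge 2b$ majorizes $\Phi$ by $\tfrac{4}{9\pi(|\xi|-b)^4}e^{-(|\xi|-b)/2}\|V_{23}\psi_1\|_1$, after which $g\in L^1\cap L^2(\mathbb{R}^3)$ follows from $\|\sqrt{V_{12}}\phi_0\|_\infty<\infty$, and $g\neq 0$ from $\Phi>0$. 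You instead integrate the $y$-variable out exactly, identifying $K(x)=\int G_6(x,y)\,dy$ with the three-dimensional Yukawa kernel $e^{-|x|}/(4\pi|x|)$, and compute $\|g\|_1=\int(K*\eta)\,V_{23}\psi_1\le\|K*\eta\|_\infty\|V_{23}\psi_1\|_1$ by Tonelli. This disposes of the $L^1$ bound and the non-vanishing in a single positivity argument and avoids the paper's spatial splitting and explicit kernel estimate entirely; it is arguably the cleaner of the two arguments, at the mild cost of the exact fiber-integration identity for the kernel (which your Fourier computation, or the same heat-kernel formula the paper uses, justifies).

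Where you go astray is in the diagnosed ``main obstacle'': your claim that $V_{23}\psi_1$ need not lie in $L^2(\mathbb{R}^6)$ is false in this setting. You treated $V_{23}$ as a potential on $\mathbb{R}^6$ and correctly observed that $H^2(\mathbb{R}^6)\hookrightarrow L^6$ falls short of $L^\infty$; but $V_{23}$ depends only on three of the six coordinates, and by Kato's classical fiberwise argument any $V\in L^2(\mathbb{R}^3)+L^\infty(\mathbb{R}^3)$ acting on a pair coordinate is infinitesimally $H_0$-bounded on $L^2(\mathbb{R}^6)$ (apply the three-dimensional relative bound in the fiber, integrate over the complementary variable, and use $\|\Delta_x u\|\le\|H_0 u\|$). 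Hence $V_{23}\psi_0\in L^2(\mathbb{R}^6)$ for every $\psi_0\in D(H_0)$ -- this is precisely what makes $H$ self-adjoint on $D(H_0)$ in the first place, and it is why the paper can assert $\Phi\in L^2(\mathbb{R}^6)$ without comment, since $\Phi=[H_0+1]^{-1}(V_{23}\psi_1)$ with an $L^2$ argument. Your workaround through H\"older ($\tfrac23=\tfrac12+\tfrac16$), the compact support of $\psi_1$, and Young's inequality with $G_6\in L^{4/3}(\mathbb{R}^6)$ (valid since $\tfrac{16}3<6$, and $1+\tfrac12=\tfrac34+\tfrac34$) is correct as written, so this is a detour rather than a gap -- but it solves a problem that is not there, and recognizing the fiberwise relative boundedness would have shortened the proof.
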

\begin{proof}
Following \cite{klaus1} let us denote by $G_0 (\xi - \xi', 1)$ the integral kernel of $[H_0 +1]^{-1}$. We need a rough upper bound on $G_0 (\xi,1)$. Using the formula on p. 262 in \cite{klaus1} we get
\begin{gather}
(4\pi)^3 |\xi|^4 e^{|\xi|/2} G_0 (\xi , 1) = \int_o^\infty t^{-3} e^{|\xi|/2} e^{-t|\xi|^2} e^{-1/(4t)} dt \\
\leq \int_0^\infty t^{-3} e^{-3/(16t)} dt = \frac{256}9
\end{gather}
Hence,
\begin{equation}\label{bgf}
G_0 (\xi,1) \leq \frac 4{9\pi |\xi|^4}  e^{-|\xi|/2}  . 
\end{equation}
Using $\|\sqrt{V_{12}}\phi_0\|_\infty < \infty $ we get $g(y) \in L^1 \cap L^2 (\mathbb{R}^3)$ if $\Phi \in L^1 \cap L^2 (\mathbb{R}^6)$. Because $\Phi \in L^2 (\mathbb{R}^6)$ to prove $\Phi \in L^1 (\mathbb{R}^6)$ it suffices to show that $\chi_{\{\xi|\; |\xi| \geq 2b\}}\Phi(\xi) \in L^1 (\mathbb{R}^6)$, where $b$ was defined after Eq.~(\ref{e7}). This follows from (\ref{bgf})
\begin{gather}\label{bgfa}
\chi_{\{\xi|\; |\xi| \geq 2b\}}\Phi(\xi) \leq \chi_{\{\xi|\; |\xi| \geq 2b\}} \int_{|\xi'|\leq b} d^6 \xi' G_0 (\xi - \xi',1) V_{23}\Psi_1(\xi') \\
\leq   \chi_{\{\xi|\; |\xi| \geq 2b\}} \frac 4{9\pi (|\xi| - b)^4}  e^{-(|\xi| - b)/2} \bigl\|V_{23}\Psi_1\bigr\|_1  \in L^1 (\mathbb{R}^6)
\end{gather}
That $g \neq 0$ follows from the inequality $\Phi (x,y) >0$. \end{proof}

Applying $\mathcal{F}_{12}$ to (\ref{e12})--(\ref{e12a}) we get 
\begin{gather}
    \hat f (z) = [-\Delta_x +p_y^2 + z^2]^{-1}\sqrt{V_{12}} \Bigl[1- \sqrt{V_{12}} (-\Delta_x +p_y^2 + z^2)^{-1} \sqrt{V_{12}}\Bigr]^{-1}   \\
    \sqrt{V_{12}} [-\Delta_x + p_y^2 + 1]^{-1} \widehat{V_{23} \psi_0}  . 
\end{gather}

By Lemma~\ref{lem:1} for $|p_y | < \rho_0 /2$ and $z < \rho_0 /2$
\begin{equation}\label{sobef}
    \Bigl[1- \sqrt{V_{12}} \Bigl(-\Delta_x +p_y^2 + z^2\Bigr)^{-1} \sqrt{V_{12}}\Bigr]^{-1} = \frac{\mathbb{P}_0}{a \sqrt{p_y^2 + z^2}} + \mathcal{Z}\Bigl(\sqrt{p_y^2 + z^2}\Bigr)  . 
\end{equation}
From now on $z\in (0,\rho_0/2)$. Notating shortly $\chi_0 (p_y) := \chi_{\{p_y | \; |p_y| < \rho_0 /2\}}$ gives us
\begin{equation}\label{e13}
    \chi_0 (p_y) \hat f(z) = \hat f_1 (z) + \hat f_2 (z),
\end{equation}
where
\begin{gather}
    \hat f_1(z) = \chi_0 (p_y) \frac{g(p_y)}{\sqrt{p_y^2 + z^2}} [-\Delta_x +p_y^2 + z^2]^{-1}\bigl(\sqrt{V_{12}} \varphi(x)\bigr) , \label{f1}\\
    \hat f_2(z) = \chi_0 (p_y) [-\Delta_x +p_y^2 + z^2]^{-1}\sqrt{V_{12}} \mathcal{Z}\Bigl(\sqrt{p_y^2 + z^2}\Bigr) \label{f2}\\
    \sqrt{V_{12}} [-\Delta_x + p_y^2 + 1]^{-1} \bigl(\mathcal{F}_{12}V_{23} \mathcal{F}^{-1}_{12}\bigr) \hat \psi_0 . \label{f2aa}
\end{gather}
The next lemma follows from the results of \cite{1}.
\begin{lemma}\label{lem:5}
$\sup_{z \in (0,\rho_0/2)}\|f_2 (z)\| < \infty$
\end{lemma}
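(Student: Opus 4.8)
The plan is to estimate $\norm{f_2(z)}$ directly in $L^2(\mathbb{R}^6)$ by passing to the partial Fourier variable $p_y$, in which the operator splits into a one-parameter family acting only in $x$. Since $\mathcal{F}_{12}$ is unitary, Plancherel gives $\norm{f_2(z)}^2 = \int_{|p_y|<\rho_0/2}\norm{\hat f_2(z)(\cdot,p_y)}_{L^2_x}^2\,d^3p_y$, and at fixed $p_y$ the integrand is the $L^2(\mathbb{R}^3_x)$-norm of
\[
[-\Delta_x + m^2]^{-1}\sqrt{V_{12}}\,\mathcal{Z}(m)\,\sqrt{V_{12}}\,[-\Delta_x + M^2]^{-1}h(p_y),
\]
with $m=\sqrt{p_y^2+z^2}$, $M=\sqrt{p_y^2+1}$, and $h(p_y)$ the $x$-section of the source in (\ref{f2aa}). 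The essential point is that the singular rank-one piece $\mathbb{P}_0/(ak)$ of $W$ has already been siphoned off into $f_1$, so by Lemma~\ref{lem:1} the factor $\mathcal{Z}(m)$ is uniformly bounded for $m<\rho_0$, and what remains is a product estimate in $x$ that is integrable in $p_y$.

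First I would collect three elementary factor bounds, uniform for $|p_y|,z<\rho_0/2$. (i) The inner resolvent is harmless: since $M\geq 1$ and $\sqrt{V_{12}}\in L^\infty$ by R1, one has $\norm{\sqrt{V_{12}}[-\Delta_x+M^2]^{-1}}\leq\sqrt{b_1}$. (ii) $\sup_m\norm{\mathcal{Z}(m)}<\infty$ by Lemma~\ref{lem:1}. (iii) The source is bounded in $p_y$: by the construction leading to (\ref{e6new}) the relevant inhomogeneity is $V_{23}\psi_1$ with $V_{23}\psi_1\in L^1\cap L^2(\mathbb{R}^6)$ (the cutoff $\psi_1$ has compact support and $\psi_0$ is bounded on it), so its partial Fourier transform obeys $\sup_{p_y}\norm{h(p_y)}_{L^2_x}\leq (2\pi)^{-3/2}\norm{V_{23}\psi_1}_{L^2_x L^1_y}<\infty$. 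The one genuinely $m$-dependent factor is the outer $[-\Delta_x+m^2]^{-1}\sqrt{V_{12}}$, and here I would \emph{not} expect uniform boundedness: writing $\norm{[-\Delta_x+m^2]^{-1}\sqrt{V_{12}}}^2=\norm{\sqrt{V_{12}}[-\Delta_x+m^2]^{-2}\sqrt{V_{12}}}$ and using that $[-\Delta_x+m^2]^{-2}$ has the three-dimensional kernel $e^{-m|x-x'|}/(8\pi m)$, the Hilbert--Schmidt bound yields $\norm{[-\Delta_x+m^2]^{-1}\sqrt{V_{12}}}\leq(\norm{V_{12}}_1/8\pi m)^{1/2}=C\,m^{-1/2}$, which diverges as $m\to 0$.

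Assembling the four factors gives $\norm{\hat f_2(z)(\cdot,p_y)}_{L^2_x}\leq C\,m^{-1/2}=C(p_y^2+z^2)^{-1/4}$, so that
\[
\norm{f_2(z)}^2 \leq C^2\int_{|p_y|<\rho_0/2}\frac{d^3p_y}{\sqrt{p_y^2+z^2}} \leq C^2\int_{|p_y|<\rho_0/2}\frac{d^3p_y}{|p_y|} < \infty ,
\]
the last integral being finite and independent of $z$ precisely because $|p_y|^{-1}$ is locally integrable in three dimensions. This is the mechanism of the lemma: having extracted the singular part into $f_1$, the residual $f_2$ carries only the mild $(p_y^2+z^2)^{-1/2}$ singularity, which the radial measure $d^3p_y=r^2\,dr\,d\Omega$ absorbs — in sharp contrast to $f_1$, whose extra factor $g(p_y)/\sqrt{p_y^2+z^2}$ produces the integrand $(p_y^2+z^2)^{-3/2}$ that diverges by Lemma~\ref{lem:3}.

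The main obstacle, and the point requiring care, is exactly the outer factor: a naive submultiplicative estimate hoping for a uniformly bounded $[-\Delta_x+m^2]^{-1}\sqrt{V_{12}}$ is false, since that operator grows like $m^{-1/2}$. The lemma survives only because this divergence is square-root mild and is weighed against the three-dimensional volume element in $p_y$; thus quantifying the $m^{-1/2}$ growth sharply via the explicit $1/(8\pi m)$ kernel of the squared resolvent, and then verifying that none of the remaining factors — in particular the uniform-in-$p_y$ control of $\norm{h(p_y)}_{L^2_x}$, which is where the compact-support cutoff $\psi_1$ and the input from \cite{1} enter — reintroduces any further singularity at $p_y=0$, is the crux of the argument.
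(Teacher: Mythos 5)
Your proof is correct in substance but takes a genuinely different route from the paper's. The paper never estimates the outer factor $[-\Delta_x+p_y^2+z^2]^{-1}\sqrt{V_{12}}$ in isolation: it inserts the weight $[1+t(p_y)+z]$ (with $t(p_y)$ from Eq.~(22) of \cite{1}) and its inverse between the factors, writing $f_2(z)=\mathcal{A}(z)\mathcal{B}(z)\mathcal{C}(z)\psi_0$, where $\mathcal{A}(z)=\chi_0(p_y)[-\Delta_x+p_y^2+z^2]^{-1}\sqrt{V_{12}}\,[1+t(p_y)+z]$ and $\mathcal{C}(z)$ carries the compensating $[1+t(p_y)+z]^{-1}$; the insertion is legitimate because $\mathcal{Z}\bigl(\sqrt{p_y^2+z^2}\bigr)$, acting fiberwise in $x$, commutes with multiplication by functions of $p_y$. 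Uniform bounds on $\mathcal{A}(z)$ and $\mathcal{C}(z)$ are then quoted from Lemmas 6 and 8 of \cite{1}, and $\mathcal{B}(z)$ is handled by Lemma~\ref{lem:1}. That yields a uniform \emph{operator-norm} bound on $L^2(\mathbb{R}^6)$, applicable to any $L^2$ source, at the price of importing machinery from \cite{1}. You instead accept the fiberwise divergence of the outer factor --- your Hilbert--Schmidt computation giving $\|[-\Delta_x+m^2]^{-1}\sqrt{V_{12}}\|\leq (\|V_{12}\|_1/8\pi m)^{1/2}$ is correct, and you are right that a naive uniform bound there would be false --- and absorb the resulting $(p_y^2+z^2)^{-1/2}$ into the three-dimensional volume element, which is self-contained and makes transparent exactly why $f_2$ is harmless while $f_1$ diverges (the contrast $(p_y^2+z^2)^{-1/2}$ versus $(p_y^2+z^2)^{-3/2}$). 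The trade-off is that your argument needs better-than-$L^2$ structure of the source, namely $\sup_{p_y}\|h(p_y)\|_{L^2_x}<\infty$, which is where the cutoff $\psi_1$ becomes essential; the paper's operator-level argument does not need this.

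On that point there is one slip, though easily repaired: you justify $V_{23}\psi_1\in L^2_xL^1_y$ by asserting that ``$\psi_0$ is bounded'' on the support of the cutoff. That does not follow: $\psi_0\in D(H_0)=H^2(\mathbb{R}^6)$, and in six dimensions $H^2$ does not embed into $L^\infty$ (one would need Sobolev order greater than $3$). The fix is cheap and does not change your estimates: by R3 the potential $V_{23}$ is relatively $H_0$-bounded, so $V_{23}\psi_0\in L^2(\mathbb{R}^6)$, hence $V_{23}\psi_1=\chi_{\{|\xi|\leq b\}}V_{23}\psi_0\in L^2(\mathbb{R}^6)$, and for each $x$ its $y$-support lies in $\{|y|\leq b\}$; Cauchy--Schwarz in $y$ then gives $\|V_{23}\psi_1(x,\cdot)\|_{L^1_y}\leq (4\pi b^3/3)^{1/2}\|V_{23}\psi_1(x,\cdot)\|_{L^2_y}$, so $\|V_{23}\psi_1\|_{L^2_xL^1_y}\leq (4\pi b^3/3)^{1/2}\|V_{23}\psi_1\|_{L^2(\mathbb{R}^6)}<\infty$. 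With this substitution your chain closes: factors (i) and (ii) are fine (for $|p_y|<\rho_0/2$ and $z\in(0,\rho_0/2)$ one has $m=\sqrt{p_y^2+z^2}<\rho_0$, so Lemma~\ref{lem:1} applies), the composed operator does act fiberwise in $p_y$ once the source is frozen as $\widehat{V_{23}\psi_1}(\cdot,p_y)$, and $\int_{|p_y|<\rho_0/2}|p_y|^{-1}\,d^3p_y$ is finite uniformly in $z$.
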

\begin{proof}
Let us rewrite (\ref{f2})--(\ref{f2aa}) in the form
\begin{equation}\label{f2a}
f_2 (z) = \mathcal{A}(z) \mathcal{B}(z) \mathcal{C}(z) \psi_0 ,
\end{equation}
where
\begin{gather}
\mathcal{A}(z) = \chi_0 (p_y) [-\Delta_x +p_y^2 + z^2]^{-1}\sqrt{V_{12}} [1+t(p_y)+z]  , \\
\mathcal{B}(z) =  \chi_0 (p_y) \mathcal{Z}\Bigl(\sqrt{p_y^2 + z^2}\Bigr) , \\
\mathcal{C}(z) =   \chi_0 (p_y) \sqrt{V_{12}} [-\Delta_x + p_y^2 + 1]^{-1} [1+t(p_y)+z]^{-1} \bigl(\mathcal{F}_{12}V_{23} \mathcal{F}^{-1}_{12}\bigr) ,
\end{gather}
and  $t(p_y)$ is defined as in Eq.~(22) in \cite{1}. Note that by (\ref{expans}) $\tilde Z \Bigl(\sqrt{p_y^2 + z^2}\Bigr)$ is a difference of two operators each of which commutes with the operator of multiplication by $[1+t(p_y)+z]$. That $\sup_{z \in (0,\rho_0/2)}\|\mathcal{A} (z)\|, \|\mathcal{C} (z)\| < \infty$ follows directly from the proofs of Lemmas 6,8 in \cite{1}. $\sup_{z \in (0,\rho_0/2)}\|\mathcal{B} (z)\| < \infty$ follows from Lemma~\ref{lem:1}.\end{proof}

The last Lemma needed for the proof of Theorem~\ref{th:1} is
\begin{lemma}\label{lem:6}
$\lim_{z \to 0}\|f_1 (z)\| = \infty$.
\end{lemma}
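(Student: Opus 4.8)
The plan is to compute $\norm{f_1(z)}^2$ explicitly by Parseval's identity in the $y$-variable and to extract from it a multiple of the divergent integral of Lemma~\ref{lem:3}. Since $\mathcal{F}_{12}$ is unitary, $\norm{f_1(z)}=\norm{\hat f_1(z)}$, and from (\ref{f1}) the $x$- and $p_y$-integrations factorize, the factor $1/a$ originating in (\ref{sobef}):
\[
\norm{f_1(z)}^2 = \frac{1}{a^2}\int_{|p_y|<\rho_0/2} d^3 p_y\,\frac{|\hat g(p_y)|^2}{p_y^2+z^2}\,\norm{[-\Delta_x+p_y^2+z^2]^{-1}\sqrt{V_{12}}\,\phi_0}_{L^2(d^3x)}^2 ,
\]
where $\hat g$ is the Fourier transform of the function $g$ introduced before Lemma~\ref{lem:4}; indeed, unwinding $\mathbb{P}_0=(\phi_0,\cdot)\phi_0$ in (\ref{f1}) produces exactly the scalar $\hat g(p_y)=(\phi_0,\sqrt{V_{12}}\,[-\Delta_x+p_y^2+1]^{-1}\widehat{V_{23}\psi_1})_x$ and leaves the $x$-factor $[-\Delta_x+p_y^2+z^2]^{-1}\sqrt{V_{12}}\phi_0$. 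By Lemma~\ref{lem:4} we have $g\in L^1\cap L^2(\mathbb{R}^3)$, $g\ge 0$, $g\neq 0$, so the hypotheses of Lemma~\ref{lem:3} are satisfied.

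The crux is the behaviour of the $x$-norm as $\kappa:=\sqrt{p_y^2+z^2}\to 0$. Writing $u_\kappa:=[-\Delta_x+\kappa^2]^{-1}\sqrt{V_{12}}\phi_0$, I would use
\[
\norm{u_\kappa}^2 = \bigl(\sqrt{V_{12}}\phi_0,[-\Delta_x+\kappa^2]^{-2}\sqrt{V_{12}}\phi_0\bigr) = -\frac{d}{ds}\bigl(\phi_0,L(\sqrt{s})\phi_0\bigr)\Big|_{s=\kappa^2},
\]
since $\tfrac{d}{ds}[-\Delta_x+s]^{-1}=-[-\Delta_x+s]^{-2}$ and $(\phi_0,L(\sqrt s)\phi_0)=(\sqrt{V_{12}}\phi_0,[-\Delta_x+s]^{-1}\sqrt{V_{12}}\phi_0)$. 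Because $L(k)=\tilde L(k)=\sum_n\tilde L_n k^n$ converges in norm for $|k|<b_2$, the scalar $(\phi_0,L(k)\phi_0)$ is real-analytic in $k$ near $0$; its linear coefficient equals $\mu'(0)=-a$ by first-order perturbation theory, so $(\phi_0,L(k)\phi_0)=1-ak+O(k^2)$, which is the content of (\ref{muexp})--(\ref{whatisa}) tested on $\phi_0$. Differentiating this convergent series with respect to $s=k^2$ (legitimate for $k>0$) gives $\norm{u_\kappa}^2=\tfrac{a}{2\kappa}+O(1)$; in particular there is $\rho_1>0$ with $\norm{u_\kappa}^2\ge\tfrac{a}{4\kappa}$ for $0<\kappa<\rho_1$. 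Conceptually this blow-up \emph{is} the resonance: as $\kappa\to0$, $u_\kappa$ tends to $(-\Delta_x)^{-1}\sqrt{V_{12}}\phi_0$, which decays like $(\phi_0,\sqrt{V_{12}})/(4\pi|x|)$ and hence fails to be square integrable precisely because $a>0$.

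Finally I would combine the two inputs. Fix $\epsilon_0:=\min(\rho_0/2,\rho_1/2)$; then for $|p_y|<\epsilon_0$ and $z<\epsilon_0$ one has $\kappa<\rho_1$, so inserting the lower bound into the displayed expression for $\norm{f_1(z)}^2$ yields
\[
\norm{f_1(z)}^2 \ge \frac{1}{4a}\int_{|p_y|<\epsilon_0} d^3 p_y\,\frac{|\hat g(p_y)|^2}{(p_y^2+z^2)^{3/2}} ,
\]
and the right-hand side tends to $+\infty$ as $z\to+0$ by Lemma~\ref{lem:3}, proving $\lim_{z\to0}\norm{f_1(z)}=\infty$.

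The step I expect to be the main obstacle is the uniform lower bound $\norm{u_\kappa}^2\ge c/\kappa$: one must pin down the exact $\kappa^{-1}$ singularity of the $x$-norm and ensure the implied $O(1)$ is genuinely bounded uniformly over the relevant range, not merely that $\norm{u_\kappa}\to\infty$. The cleanest route is the differentiation-of-$(\phi_0,L(k)\phi_0)$ identity above, which reduces the question to the already-established analyticity and first-order expansion of Section~\ref{sec:2}; the only care required is to justify term-by-term differentiation of the norm-convergent series in $k$ together with the change of variables $s=k^2$ (valid away from $k=0$). The Parseval factorization and the concluding appeal to Lemma~\ref{lem:3} are then routine.
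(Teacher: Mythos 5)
Your proof is correct, and for the crux of the argument it takes a genuinely different route from the paper. Both proofs share the outer skeleton forced by the setup --- Parseval in the $y$-variable factorizes $\norm{\hat f_1(z)}^2$ into $\int \abs{\hat g(p_y)}^2 (p_y^2+z^2)^{-1} \norm{u_\kappa}_x^2\, d^3p_y$ with $\kappa = \sqrt{p_y^2+z^2}$, and the conclusion then comes from Lemmas~\ref{lem:3} and \ref{lem:4} once one knows $\norm{u_\kappa}^2 \gtrsim \kappa^{-1}$ --- but the key estimate is obtained by entirely different means. The paper proves the pointwise lower bound (\ref{expin}) on $u_\kappa(x)$ itself, via the elementary kernel inequality (\ref{zabyv}) for the Yukawa kernel, and then integrates $C_0^2 e^{-4\kappa\abs{x}}/\abs{x}^2$ over $\abs{x}\ge R_0$ to extract $c/\kappa$; this needs only $\int_{\abs{x'}\le R_0}\sqrt{V_{12}}\,\phi_0 > 0$ and as a byproduct exhibits the $\abs{x}^{-1}$ spatial decay responsible for the divergence. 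You instead compute the norm exactly through the identity $\norm{u_{\sqrt s}}^2 = -\frac{d}{ds}(\phi_0, L(\sqrt s)\phi_0)$ and term-by-term differentiation of the convergent expansion of Section~\ref{sec:2}, obtaining the sharp asymptotics $\norm{u_\kappa}^2 = \frac{a}{2\kappa} + \mathcal{O}(1)$ with the $\mathcal{O}(1)$ uniform on a fixed interval (your justification --- the linear coefficient of $(\phi_0, \tilde L(k)\phi_0)$ equals $\mu'(0) = -a$ by first-order perturbation theory, and $\sum_{n\ge 2} n c_n \kappa^{n-2}$ is uniformly bounded inside the radius of convergence --- is sound). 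Your route buys the exact constant and makes transparent that the divergence of $\norm{f_1(z)}$ is driven by the very same $a>0$ of (\ref{muexp})--(\ref{whatisa}) that encodes the zero-energy resonance, at the price of leaning on the R1-based analyticity machinery; the paper's route is more hands-on and self-contained at this point, trading sharpness for an elementary kernel bound. One small point you handled correctly but should keep explicit: Lemma~\ref{lem:3} requires $g \ge 0$, which is not literally part of the statement of Lemma~\ref{lem:4}; it follows, as you assert, from $\Phi > 0$ and $\sqrt{V_{12}}\,\phi_0 \ge 0$.
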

\begin{proof}
We get
\begin{gather}
    \| \hat f_1(z) \|^2 = \frac 1{4\pi^2} \int_{|p_y| \leq \rho_0/2} d p_y \frac{|\hat g(p_y)|^2}{p_y^2 + z^2} \int dx \int dx' \int dx'' \frac {e^{-\sqrt{p_y^2 + z^2}|x-x'|}}{|x-x'|} \label{e15} \\
    \times \frac {e^{-\sqrt{p_y^2 + z^2}|x-x''|}}{|x-x''|} \bigl(\sqrt{V_{12}}(\alpha x') \varphi(x')\bigr) \bigl(\sqrt{V_{12}}(\alpha x'') \varphi(x'')\bigr)  . 
\end{gather}
The are constants $R_0, C_0 >0$ such that
\begin{equation}\label{expin}
\int d^3 x' \frac{e^{-\delta|x-x'|}}{|x-x'|} \sqrt{V_{12}(\alpha x')}\phi_0 (x') \geq C_0 \frac{e^{-2\delta|x|}}{|x|}\chi_{\{x|\; |x|\geq R_0\}}
\end{equation}
for all $\delta >0$. Indeed, the following inequality holds for all $R_0 >0$
\begin{equation}\label{zabyv}
\chi_{\{x|\; |x|\geq R_0\}} \frac{e^{-\delta|x-x'|}}{|x-x'|} \chi_{\{x|\; |x'|\leq R_0\}} \geq \frac{e^{-2\delta|x|}}{2|x|}\chi_{\{x|\; |x|\geq R_0\}}  . 
\end{equation}
Substituting (\ref{zabyv}) into the lhs of (\ref{expin}) we obtain (\ref{expin}), where
\begin{equation}
C_0 = \frac 12 \int_{|x'|\leq R_0} d^3 x' \sqrt{V_{12}(\alpha x')}\phi_0 (x')
\end{equation}
and one can always choose $R_0$ so that $C_0 >0$. Using (\ref{expin}) we get
\begin{equation}
 \| \hat f_1(z) \|^2  \geq c \int_{|p_y| \leq \frac{\rho_0}2} d p_y \frac{|\hat g(p_y)|^2}{(p_y^2 + z^2)^{3/2}} ,
\end{equation}
where $c>0$ is a constant. Now the result follows from Lemmas~\ref{lem:3}, \ref{lem:4}.  \end{proof}

The proof of Theorem~\ref{th:1} is now trivial.
\begin{proof}[Proof of Theorem~\ref{th:1}]
A bound state at threshold should it exist must satisfy inequality (\ref{e11}) for all $z \in (0,\rho_0/2)$. Thus $\|f(z)\|$ and, hence, $\|\chi_0 \hat f(z)\|$ are uniformly bounded for $z \in (0,\rho_0/2)$. By (\ref{e13}) and Lemmas~\ref{lem:5}, \ref{lem:6} this leads to a contradiction.  \end{proof}

\section{Example}\label{sec:4}

Suppose that R2 is fulfilled and let us introduce the coupling constants $\Theta , \Lambda >0$ in the following manner
\begin{equation}
H(\Theta, \Lambda) = [-\Delta_x - V_{12}] -\Delta_y - \Theta V_{13} - \Lambda V_{23}  . 
\end{equation}
For simplicity, let us require that $V_{ik} \geq 0$ and $V_{ik} \in C^\infty_0 (\mathbb{R}^3)$. Let $\Theta_{cr}, \Lambda_{cr}$ denote the 2--body coupling constant thresholds for particle pairs 1,3 and 2,3 respectively. On one hand, using a variational argument it is easy to show that there exists $\epsilon >0$ such that $H(\Theta, \Lambda) > 0$ if $\Theta, \Lambda \in [0,\epsilon]$ (that is in this range $H(\Theta, \Lambda)$ has neither negative energy bound states nor a zero energy resonance) \cite{gridnev,martin}. On the other hand, by the Efimov effect the negative spectrum of $H(\Theta_{cr}, \Lambda)$ is not empty for $\Lambda \in (0, \Lambda_{cr})$ \cite{sobol,yafaev}. So let us fix $\Lambda = \epsilon$ and let $\Theta$ vary in the range $[\epsilon, \Theta_{cr}]$. The energy of the ground state $E_{gr}(\Theta) = \inf \sigma \Bigl(H(\Theta, \epsilon)\Bigr)$ is a continuous function of $\Theta$. $E_{gr}(\Theta)$ decreases monotonically at the points where $E_{gr}(\Theta) <0$. Because $E_{gr}(\epsilon) = 0$ there must exist $\Theta_0 \in (\epsilon, \Theta_{cr})$ such that $E_{gr}(\Theta) <0$ for $\Theta \in (\Theta_0 , \Theta_{cr})$ and $E_{gr}(\Theta_0) = 0$.

Summarizing, $H(\Theta_0 , \epsilon)$ is at the 3--body coupling constant threshold. By Theorem~\ref{th:1} $H(\Theta_0 , \epsilon)$ has a zero energy resonance but not a zero energy bound state. If $\psi_{gr}(\Theta, \xi) \in L^2(\mathbb{R}^6)$ is a wave function of the ground state defined on the interval $(\Theta_0, \Theta_{cr})$ then for $\Theta \to \Theta_0 + 0$ the wave function must totally spread (see Sec.~2 in \cite{1}). Which means that for any $R>0$
\begin{equation}
\lim_{\Theta \to \Theta_0 + 0} \int_{|\xi| < R}  |\psi_{gr}(\Theta, \xi)|^2 d\xi \: \to 0 . 
\end{equation}

Note also that if the particles 1,2 would have a ground state at the energy $e_{12} < 0$ then the ground state of the 3 body system at the energy $e_{12}$ cannot be bound, it can only be a resonance \cite{klaus2}.

\begin{acknowledgements}
The author would like to thank Prof. Walter Greiner for the warm hospitality at FIAS.
\end{acknowledgements}

\end{document}